\documentclass[letterpaper]{article}
\usepackage[cmex10]{amsmath}
\usepackage{amsthm}
\usepackage{enumerate}
\usepackage{graphicx}
\usepackage{epstopdf}

\usepackage{amsfonts,amscd}
\usepackage{float}

\newtheorem{theorem}{Theorem}[section]
\newtheorem{lemma}[theorem]{Lemma}

\newcommand{\be}{\begin{equation}}
\newcommand{\ee}{\end{equation}}
\newcommand{\bea}{\begin{eqnarray}}
\newcommand{\eea}{\end{eqnarray}}
\newcommand{\beann}{\begin{eqnarray*}}
\newcommand{\eeann}{\end{eqnarray*}}
\newcommand{\ip}[2]{\langle{#1}\mid{#2}\rangle}

\newcommand{\unity}{{1\hskip -3pt \rm{I}}}
\newcommand{\Tr}{{\text{Tr}}}
\begin{document}
\title{Shannon and R\'enyi entropy rates of stationary vector valued Gaussian random processes}
\author{
    Jaideep Mulherkar\\
    Dhirubhai Ambani Institute of \\
    Information and Communication Technology\\
    jaideep\_mulherkar@daiict.ac.in
}
\maketitle
\begin{abstract}
We derive expressions for the Shannon and R\'enyi entropy rates of stationary vector valued Gaussian random processes using the block matrix version of Szeg\"o's theorem.
\end{abstract}
{\small \bf Keywords:} Entropy rate, Gaussian random vectors, Block matrix, Szego's theorem
\section{Introduction}

The differential Shannon entropy $H(X)$ of a continuous random variables $X$ with probability density $f_X(x)$  is defined as
\bea
H(X) = -\int f_X(x)\log f_X(x) dx
\eea
Differential entropy of a continuous random variable was introduced in Shannon's original paper \cite{CS1948}. The Shannon Entropy appears in his source coding theorem as a bound the lossless compression possible. The connection between thermodynamics and  information theory was first made by Boltzmann and expressed by his famous equation
\beann
{\displaystyle S=k_{\text{B}}\log(W)} {\displaystyle S=k_{\text{B}}\log(W)}
\eeann
where $S$  is the thermodynamic entropy of a particular macrostate, W is the number of microstates that can yield the given macrostate, and $k_B$ is Boltzmann's constant.
R\'enyi \cite{AR1961} and Tsallis \cite{T1988} generalized the notion of Shannon entropy.
The R\'enyi entropy of order $\alpha$, where $\alpha \geq 0$ and $\alpha \neq 1$, is defined as 
\bea
H_{\alpha}(X) = \frac{1}{1-\alpha} \log\int f_X^{\alpha}(x)dx
\eea
As $\alpha \rightarrow 0$ , the R\'enyi entropy weighs all possible events more equally and the R\'enyi entropy is just the logarithm of the size of the support of $X$. The limit for $\alpha \rightarrow 1$ is the Shannon entropy. As $\alpha$ approaches infinity, the R\'enyi entropy is increasingly determined by the events of highest probability and approaches what is called the min-entropy. The R\'enyi entropy has been widely used in information theory, economics and physics and computer science. In physics has been useful in the analysis of quantum entanglement \cite{BT2002}, uncertainty measures \cite{GL2004, IB2007}, quantum  channel capacity formulas \cite{MH2011} and quantum spin systems \cite{FK2008} and in statistical mechanics \cite{LMD2000}.

Let ${\mathcal{X} = \{X_1,X_2,...,X_n\}}$ be a random process of continuous random variables $X_i$  with joint density function $f_{\mathcal{X}}(x_1,x_2,\ldots,x_n)$ then the joint Shannon and R\'enyi differential entropy is given by
\bea
\label{eq:Ent1}
H(\mathcal{X}) &=& -\int f_{\mathcal{X}}(x_1,x_2,\ldots,x_n) \log f_{\mathcal{X}}(x_1,x_2,\ldots,x_n)d\mathbf{x}\nonumber \\
H_{\alpha}(\mathcal{X}) &=& \frac{1}{1-\alpha}\log\int f_{\mathcal{X}}^{\alpha}(x_1,x_2,\ldots,x_n) d\mathbf{x}
\eea
For a random process $\mathcal{X} = \{X_1,X_2,...,X_n,\ldots\}$ the Shannon and R\'enyi entropy rate is defined as
\bea
\label{eq:Ent2}
\bar{H}(\mathcal{X}) = \lim_{n\rightarrow \infty}\frac{H(\mathcal{X})}{n} \nonumber\\
\bar{H}_{\alpha}(\mathcal{X}) = \lim_{n\rightarrow \infty}\frac{H_{\alpha}(\mathcal{X})}{n}
\eea
Both these limits exist when the process is stationary. Entropy rate is the average amount of information per symbol of the process and is key quantity in information theory. There is a nice formula for the entropy rate of a Markov process \cite{CT1991}. A formula for the R\'enyi entropy rate for Markov processes was obtained in \cite{RFC1999}. For hidden Markov models, a general formula for the entropy rate is still one of the important outstanding problems \cite{MPW2011}. Some results on the entropy rates of special families of hidden Markov processes were obtained in \cite{JSZ2004,HM2010,MMN2012}.  

In this paper we look at the entropy rate of a vector valued Gaussian random process, that is a random process where each co-ordinate is a vector belonging to $\mathbb{R}^m$ for some $m \in \mathbb{N}$ and the random vectors are jointly Gaussian. A scenario where studying vector valued random processes and their entropy rates may be important is in computing the capacities of Multiple Input Multiple Output (MIMO) channels \cite{GJJV2008}. In section \ref{sec:ERMG} we review at some results entropy rates of Gaussian processes. In section \ref{sec:SZTHM} we define the Shannon and R\'enyi entropy rates of vector processes and introduce the block matrix version of Szeg\"o's theorem and in section \ref{sec:ERPRF}  we derive the entropy rate formulas.

\section{Entropy rate of multivariate Gaussian random variables}
\label{sec:ERMG}
If a random variable $X$ is a $N(0, \sigma^2)$ then it is well known \cite{CT1991} that
\bea
H(X) &=& \frac{1}{2} \log 2\pi e \sigma^2\\ \nonumber
H_{\alpha}(X) &=& \frac{1}{2} \log 2\pi \sigma^2 \alpha^{\frac{1}{\alpha-1}}
\eea
For a zero mean multivariate Gaussian distribution process $ \mathcal{X}=\{X_1,X_2,\ldots, X_n\}$ the density function is  $f_{\mathcal{X}}(x_1,x_2,\ldots,x_n) = \frac{1}{\sqrt{(2\pi)^n \det(K_n)}}e^{\frac{1}{2}x^TK_n^{-1}x}$ where
\beann
(K_n)_{ij} = \text{Cov}(X_i,X_j)
\eeann
is the covariance matrix. 
From equations (\ref{eq:Ent1}) and (\ref{eq:Ent2})  one gets that the Shannon and R\'enyi entropy rates of a stationary Gaussian process are given by
\bea
\label{eq:Ent3}
H(\mathcal{X}) &=& \frac{1}{2} \log 2\pi e + \frac{1}{2}\lim_{n\rightarrow\infty} \frac{\log\det(K_n)}{n}\\ \nonumber
H_{\alpha}(\mathcal{X}) &=& \frac{1}{2} \log 2\pi \alpha^{\frac{1}{\alpha-1}} + \frac{1}{2}\lim_{n\rightarrow\infty} \frac{\log\det(K_n)}{n}
\eea
For a stationary Gaussian process  the covariance matrix is a Toeplitz matrix with entries $K(j) := (K_n)_{i,i+j}$.
\beann
\label{eq:Cov}
K_n = \begin{pmatrix}{K}(0) & {K}(-1) & \cdots& {K}(-n+1) \\
	{K}(1) & {K}(0)  & \cdots& {K}(-n+2) \\
	\vdots &   \vdots        & \vdots & \vdots     \\
	{K}(n-1) & {K}(n-1)  & \cdots& {K}(0) 
\end{pmatrix}
\eeann
$K(j)$ is called the autocorrelation function of the Gaussian process and its Fourier transform is the power spectral density $S(\lambda)$: 
\beann
S(\lambda) = \sum_{m= -\infty}^{\infty} K(m) e^{-im\lambda} 
\eeann
As $n\rightarrow \infty$ the density of the eigenvalues of the covariance matrix tends to a limit and the equation (\ref{eq:Ent3}) can be shown to have a particularly nice form that relates to the power spectral density $S(\lambda)$. It was shown by Kolmorogov \cite{K1958} and Golshani et al \cite{GP2010} respectively that the Shannon and R\'enyi entropy rates of Gaussian processes is given by 
\bea
\bar{H}(\mathcal{X}) &=& \frac{1}{2} \log 2\pi e + \frac{1}{4\pi} \int_{-\pi}^{\pi} \log S(\lambda)  d\lambda \\ \nonumber
\bar{H}(\mathcal{X}) &=& \frac{1}{2} \log 2\pi \alpha^{\frac{1}{\alpha-1}} + \frac{1}{4\pi} \int_{-\pi}^{\pi} \log S(\lambda)  d\lambda \\ \nonumber
\eea

\section{Gaussian random vectors and block matrix version of Szeg\"o's theorem}
\label{sec:SZTHM}
Consider a stationary vector valued random process $\mathcal{X} =\{\mathbf{X_1},\mathbf{X_2},...,\mathbf{X_n}\}$, where each coordinate is a random vector taking values in $\mathbb{R}^m$. The random vectors $\mathbf{X_i}$ are jointly Gaussian.  The joint density function $f_{\mathcal{X}}:\mathbb{R}^{nm}\rightarrow \mathbb{R}$ is
\beann
f_{\mathcal{X}}(\mathbf{x_1,x_2,\ldots,x_n}) = \frac{1}{\sqrt{(2\pi)^{nm} (\det \hat{K}_n) } } e^{-\frac{1}{2} (\mathbf{x_1,x_2,\ldots,x_n})^T \hat{K}_{n}^{-1}(\mathbf{x_1,x_2,\ldots,x_n}})
\eeann 
where each $\mathbf{x_i} \in \mathbb{R}^m$ and $\hat{K}_{n}$ is the $nm\times nm$ covariance matrix which is in block Toeplitz form 
\bea
\label{eq:Cov}
\hat{K}_n = \begin{pmatrix}\hat{K}(0) & \hat{K}(-1) & \cdots& \hat{K}(-n+1) \\
                          \hat{K}(1) & \hat{K}(0)  & \cdots& \hat{K}(-n+2) \\
                          \vdots &   \vdots        & \vdots & \vdots     \\
                          \hat{K}(n-1) & \hat{K}(n-1)  & \cdots& \hat{K}(0) 
                          \end{pmatrix}
\eea
Each $\hat{K}(i)$ is a $m\times m$ matrix. We define the joint entropy of the random stationary process of Gaussian random vectors equivalently as
\beann
H(\mathcal{X}) = -\int_{\mathbb{R}^m\times\cdots\times \mathbb{R}^m} f_{\mathcal{X}}(\mathbf{x_1,x_2,\ldots,x_n}) \log f_{\mathcal{X}}(\mathbf{x_1,x_2,\ldots,x_n})\mathbf{dx_1\ldots dx_n}
\eeann
and the entropy rate per coordinate vector as
\bea
\bar{H}(\mathcal{X}) = \lim_{n\rightarrow \infty} \frac{H(\mathcal{X})}{n}
\eea

In computing formulas for the entropy rates, determinants of Toeplitz and block Toeplitz matrices will play an important role. Szeg\"o's  limit theorems describe the asymptotic behavior of the determinants of large Toeplitz matrices \cite{S1915,S1952}. Szeg\"o's  limit theorems have been generalized for block Toeplitz matrices \cite{W1974, W1976}. Toeplitz matrices and determinants have been useful in the study of determinantal processes, integrable models, and entanglement entropy \cite{J2002, BE2007,IMM2008,BSK2006}. We will use the following version of Szeg\"o's limit theorem for block Toeplitz matrices. 
 
\begin{theorem}
\label{thm:Szego}
Let $T:[-\pi,\pi]\rightarrow \mathcal{M}_d$ be continuous matrix valued function with Fourier coefficients by
\beann
\hat{T}(k) = \frac{1}{2\pi}\int_{-\pi}^{\pi} T(\theta) e^{-ik\theta} d\theta \in \mathcal{M}_d
\eeann
then for any absolutely continuous function $f:[\inf T, \sup T]\rightarrow \mathbb{R}$
\beann
\lim_{n\rightarrow \infty} \frac{1}{n} \Tr( f(\hat{T}_n)) = \frac{1}{2\pi} \int_{-\pi}^{\pi}\Tr(f(T(\theta))d\theta
\eeann
\end{theorem}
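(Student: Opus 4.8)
The plan is to prove the statement first for polynomial test functions $f$ by the method of moments, and then to pass to a general (absolutely) continuous $f$ by Weierstrass approximation. Throughout I assume, as the notation $[\inf T,\sup T]$ presupposes, that each $T(\theta)$ is Hermitian, so that $\hat T_n$ is a Hermitian $nd\times nd$ matrix and the functional calculus $f(\hat T_n)$ is well defined with real eigenvalues. The first move is to realize the finite block Toeplitz matrix $\hat T_n$ as a compression of a single fixed operator. Let $L$ be the bi-infinite block Laurent operator on $\ell^2(\mathbb Z,\mathbb C^d)$ whose $(j,k)$ block is $\hat T(j-k)$; via the Fourier isometry $\ell^2(\mathbb Z,\mathbb C^d)\cong L^2([-\pi,\pi],\mathbb C^d)$ this $L$ is unitarily equivalent to multiplication by $T(\theta)$. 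In particular $L=L^*$, $\|L\|=\sup_\theta\|T(\theta)\|$, and the spectrum of $L$ equals $\bigcup_\theta \mathrm{spec}\,T(\theta)\subseteq[\inf T,\sup T]$. Writing $P_n$ for the projection onto coordinates $0,\dots,n-1$, we have $\hat T_n=P_nLP_n$, so every eigenvalue of $\hat T_n$ lies in the numerical range of $L$, hence in $[\inf T,\sup T]$, uniformly in $n$.

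The heart of the argument is the monomial case $f(x)=x^p$. The key identity is that $L^p$, being multiplication by $T(\theta)^p$, has every diagonal block equal to the zeroth Fourier coefficient of $T^p$, namely $\tfrac{1}{2\pi}\int_{-\pi}^\pi T(\theta)^p\,d\theta$; consequently $\tfrac1n\Tr(P_nL^pP_n)=\Tr\big(\tfrac1{2\pi}\int T^p\big)=\tfrac1{2\pi}\int\Tr(T^p)\,d\theta$ exactly. It therefore suffices to show that the edge correction $\tfrac1n\big[\Tr((P_nLP_n)^p)-\Tr(P_nL^pP_n)\big]$ tends to $0$. I would expand $(P_nLP_n)^p$ by substituting $P_n=I-Q_n$ at each of the internal projections, where $Q_n=I-P_n$: the leading term is exactly $P_nL^pP_n$, and each remaining term contains at least one factor $Q_n$ inserted between two copies of $L$. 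For $p=2$ the correction equals $\Tr(P_nLQ_nLP_n)=\|Q_nLP_n\|_{HS}^2=\sum_{a=0}^{n-1}\sum_{c\notin[0,n-1]}\|\hat T(c-a)\|_{HS}^2$, which a direct count bounds by $\sum_{k\ge 1}\min(k,n)\big(\|\hat T(k)\|_{HS}^2+\|\hat T(-k)\|_{HS}^2\big)$; since continuity of $T$ gives $\sum_k\|\hat T(k)\|_{HS}^2<\infty$ by Parseval, splitting the sum at a large cutoff shows this is $o(n)$.

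The general $p$ follows by the same mechanism: each correction term is estimated by H\"older's inequality for Schatten norms, peeling off one Hilbert--Schmidt factor of the form $Q_nLP_n$ (controlled exactly as above) while bounding the remaining factors by the operator norm $\|L\|$, so that each carries a factor $\|Q_nLP_n\|_{HS}=o(\sqrt n)$ and every such term is $o(n)$. Linearity then extends the monomial result to all polynomials. For a general continuous $f$ on $[a,b]:=[\inf T,\sup T]$, Weierstrass supplies a polynomial $q$ with $\|f-q\|_\infty<\varepsilon$ there; since all $nd$ eigenvalues of $\hat T_n$ and all $d$ eigenvalues of each $T(\theta)$ lie in $[a,b]$, both $\big|\tfrac1n\Tr f(\hat T_n)-\tfrac1n\Tr q(\hat T_n)\big|$ and $\big|\tfrac1{2\pi}\int\Tr f(T)-\tfrac1{2\pi}\int\Tr q(T)\big|$ are at most $d\varepsilon$, while the polynomial case handles $q$; a routine $3\varepsilon$ argument finishes the proof, and absolute continuity is not needed beyond continuity for this first-order Szeg\"o limit.

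The main obstacle I anticipate is precisely the uniform $o(n)$ control of the higher-order edge terms. Organizing the expansion of $(P_nLP_n)^p$ and verifying that every term carrying a $Q_n$ is negligible after dividing by $n$ is the one genuinely technical point, and the Schatten-norm bookkeeping there — identifying in each term a single Hilbert--Schmidt factor that localizes near the two boundaries $0$ and $n-1$ while the rest stays uniformly bounded — must be carried out with some care. Everything else, namely the unitary reduction to multiplication by $T(\theta)$, the exact evaluation of $\tfrac1n\Tr(P_nL^pP_n)$, the uniform eigenvalue confinement to $[\inf T,\sup T]$, and the polynomial approximation, is essentially mechanical once the $p=2$ estimate is in hand.
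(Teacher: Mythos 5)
The paper does not actually prove Theorem \ref{thm:Szego}: it is imported as a known block-matrix form of Szeg\"o's first limit theorem, with the references \cite{W1974,W1976} standing in for a proof. So there is no in-paper argument to compare yours against; what follows is an assessment of your proposal on its own terms.

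Your strategy --- realize $\hat T_n$ as the compression $P_nLP_n$ of the block Laurent operator with symbol $T(\theta)$, prove the limit for monomials by showing the edge correction $\Tr((P_nLP_n)^p)-\Tr(P_nL^pP_n)$ is $o(n)$, and pass to general continuous $f$ by Weierstrass approximation plus the uniform confinement of all eigenvalues to $[\inf T,\sup T]$ --- is the standard proof of the first-order Szeg\"o limit, and it is correct. The exact evaluation $\frac1n\Tr(P_nL^pP_n)=\frac1{2\pi}\int\Tr(T(\theta)^p)\,d\theta$, the Parseval count $\|Q_nLP_n\|_{HS}^2=\sum_{k\ge1}\min(k,n)\bigl(\|\hat T(k)\|_{HS}^2+\|\hat T(-k)\|_{HS}^2\bigr)=o(n)$, and the $d\varepsilon$ control of the polynomial-approximation error all check out. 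Two small repairs. First, a sign: for $p=2$ the correction is $-\Tr(P_nLQ_nLP_n)=-\|Q_nLP_n\|_{HS}^2$, which is immaterial for the $o(n)$ conclusion. Second, and this is the one step you rightly flag as needing care: in the general-$p$ expansion, the Hilbert--Schmidt factor you peel off at the \emph{first} occurrence of $Q_n$ is $P_nL^{i}Q_n$ for some $1\le i\le p-1$, not literally $Q_nLP_n$. This is harmless: $L^{i}$ is the Laurent operator of the continuous symbol $T(\theta)^{i}$, so the same Parseval count gives $\|P_nL^{i}Q_n\|_{HS}=o(\sqrt n)$, while the remaining factor $Y=LX_{i+1}\cdots LP_n$ obeys $\|Y\|_{HS}\le\|L\|^{\,p-i}\|P_n\|_{HS}=O(\sqrt n)$; hence $|\Tr(P_nL^{i}Q_nY)|\le\|P_nL^{i}Q_n\|_{HS}\,\|Y\|_{HS}=o(n)$, and the $2^{p-1}-1$ correction terms together are $o(n)$. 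With that bookkeeping made explicit the argument is complete; as you observe, only continuity of $f$ (not absolute continuity) and the Hermiticity of $T(\theta)$ implicit in the notation $[\inf T,\sup T]$ are used. If you want to avoid the Schatten bookkeeping altogether, a common alternative is to first treat banded symbols (trigonometric polynomial $T$), for which $(P_nLP_n)^p-P_nL^pP_n$ has rank $O(1)$ in $n$ and bounded norm, and then approximate a general continuous $T$ uniformly by its Fej\'er means.
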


\section{Shannon and R\'enyi entropy rates of Gaussian random vectors}
\label{sec:ERPRF}
Let $\mathcal{X} =\{\mathbf{X_1},\mathbf{X_2},...,\mathbf{X_n},\ldots\}$ be a vector valued stationary Gaussian process of random vectors with each $\mathbf{X_i}$ is a random vector taking values in $\mathbb{R}^m$. The covariance matrix $\hat{K}_n$ of  $\mathcal{X}$ is a $nm\times nm$ block Toeplitz matrix of the form of equation (\ref{eq:Cov}). Let $K(\theta)$ be the matrix valued Fourier coefficients corresponding to $\hat{K}_n$ as given by theorem (\ref{thm:Szego}). We have the following theorem:

\begin{theorem}
The Shannon entropy rate per coordinate of the vector valued Gaussian random process  $\mathcal{X}$ is given by
\beann
\bar{H}(\mathcal{X}) = \frac{m}{2} \log 2\pi e + \frac{1}{4\pi} \int_{-\pi}^{\pi} \Tr (\log K(\theta)) d\theta
\eeann
\end{theorem}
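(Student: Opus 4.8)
The plan is to reduce everything to the finite-dimensional Gaussian entropy formula and then invoke the block Szeg\"o theorem (Theorem \ref{thm:Szego}). First I would compute the joint Shannon entropy $H(\mathcal{X})$ exactly for finite $n$. Since $\mathcal{X}$ is a zero-mean Gaussian on $\mathbb{R}^{nm}$ with covariance $\hat{K}_n$, substituting the density $f_{\mathcal{X}}$ into the definition of $H$ and carrying out the standard computation for the entropy of a multivariate Gaussian gives
\bea
H(\mathcal{X}) = \frac{nm}{2}\log 2\pi e + \frac{1}{2}\log\det\hat{K}_n.
\eea
Here the quadratic cross term contributes $\frac{1}{2}\mathbb{E}[\mathbf{x}^T\hat{K}_n^{-1}\mathbf{x}] = \frac{1}{2}\Tr(\hat{K}_n^{-1}\hat{K}_n) = \frac{nm}{2}$, which combines with the $(2\pi)^{nm}$ normalization constant to produce the $2\pi e$ factor.

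Next I would divide by $n$ and take the limit. The first term is linear in $n$, so $\frac{1}{n}\cdot\frac{nm}{2}\log 2\pi e = \frac{m}{2}\log 2\pi e$ survives unchanged, and it remains only to evaluate
\beann
\frac{1}{2}\lim_{n\rightarrow\infty}\frac{\log\det\hat{K}_n}{n}.
\eeann
The key step is to rewrite the determinant as the trace of a matrix function: since $\hat{K}_n$ is symmetric positive definite, $\log\det\hat{K}_n = \Tr(\log\hat{K}_n)$, where $\log$ denotes the matrix logarithm. This puts the expression in exactly the form $\frac{1}{n}\Tr(f(\hat{T}_n))$ appearing in Theorem \ref{thm:Szego}, with $f = \log$, symbol $T = K$, and $\hat{T}_n = \hat{K}_n$.

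Applying Theorem \ref{thm:Szego} then gives
\beann
\lim_{n\rightarrow\infty}\frac{1}{n}\Tr(\log\hat{K}_n) = \frac{1}{2\pi}\int_{-\pi}^{\pi}\Tr(\log K(\theta))\,d\theta,
\eeann
and assembling the two pieces yields the claimed formula after the factor of $\frac{1}{2}$ turns $\frac{1}{2\pi}$ into $\frac{1}{4\pi}$. The main obstacle is verifying the hypotheses of the Szeg\"o theorem for the choice $f=\log$: absolute continuity of $\log$ on $[\inf T,\sup T]$ requires $\inf T > 0$, i.e. that the spectral density matrix $K(\theta)$ be uniformly bounded below by a positive multiple of the identity (equivalently, that the eigenvalues of $\hat{K}_n$ stay bounded away from $0$ uniformly in $n$). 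This positivity/regularity condition is precisely what makes $\log\hat{K}_n$ and $\log K(\theta)$ well defined and the limit finite, so I would state it explicitly as a standing assumption on the process; once it is granted, the remaining manipulations are routine.
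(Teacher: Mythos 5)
Your proposal is correct and follows essentially the same route as the paper: compute the finite-$n$ Gaussian entropy (the quadratic-form expectation $\mathbb{E}[\mathbf{x}^T\hat{K}_n^{-1}\mathbf{x}]=\Tr(\hat{K}_n^{-1}\hat{K}_n)=nm$ is exactly the content of the paper's Lemma~\ref{lem:Gauss}), rewrite $\log\det\hat{K}_n=\Tr(\log\hat{K}_n)$, and apply the block Szeg\"o theorem. Your explicit remark that one must assume $K(\theta)$ is uniformly bounded below so that $f=\log$ satisfies the hypotheses of Theorem~\ref{thm:Szego} is a point the paper leaves implicit, and is a worthwhile addition.
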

\begin{proof}
We have
\beann
\bar{H}(\mathcal{X} &=& -\lim_{n\rightarrow \infty} \frac{1}{n} \int f_{\mathcal{X}}(\mathbf{x}) \log f_{\mathcal{X}}(\mathbf{x})d\mathbf{x}\\
&=& -\lim_{n\rightarrow \infty} \frac{1}{n} \int f_{\mathbf{X_1,...,X_n}}(\mathbf{x})\Big[-\frac{1}{2}\mathbf{x}^T \hat{K}_{n}^{-1}\mathbf{x} - \log (2\pi)^{\frac{nm}{2}}(\det \hat{K}_n)^{\frac{1}{2}}\Big]d\mathbf{x} \\
&=& \lim_{n\rightarrow \infty}\Big[ \frac{1}{2n} \int (\mathbf{x}^T \hat{K}_{n}^{-1}\mathbf{x}) f_{\mathbf{X_1,...,X_n}}(\mathbf{x})d\mathbf{x} + \frac{nm}{2n} \log2\pi +  \frac{1}{2n}\log(\det(\hat{K}_n))\Big]
\eeann
Using lemma (\ref{lem:Gauss}) we get
\beann
\int (\mathbf{x}^T \hat{K}_{n}^{-1}\mathbf{x}) f_{\mathbf{X_1,...,X_n}}(\mathbf{x})d\mathbf{x} =  \Tr(\hat{K}_{n}^{-1}\hat{K}_{n}) = \Tr(\unity_{nm}) = nm
\eeann
We also have the identity
\beann
\log (\det(A)) = \Tr (\log (A))
\eeann
Thus we have
\beann
\bar{H}(\mathcal{X}) = \lim_{n\rightarrow \infty} \frac{nm}{2n} + \lim_{n\rightarrow \infty} \frac{nm}{2n} \log 2\pi + \lim_{n\rightarrow \infty} \frac{1}{2n} \Tr(\log(\hat{K}_n))
\eeann
Using theorem (\ref{thm:Szego}) we get
\beann
\bar{H}(\mathcal{X}) = \frac{m}{2} + \frac{m}{2} \log 2\pi + \frac{1}{4\pi} \int_{-\pi}^{\pi} \Tr (\log K(\theta)) d\theta\\
\bar{H}(\mathcal{X}) = \frac{m}{2} \log 2\pi e + \frac{1}{4\pi} \int_{-\pi}^{\pi} \Tr (\log K(\theta)) d\theta
\eeann
\end{proof}
\begin{lemma}
\label{lem:Gauss}
Let $\mathcal{X}$ be a jointly Gaussian random vector zero mean and covariance matrix $K$ and density function $f_{\mathcal{X}}(\mathbf{x})$
\beann
\int (\mathbf{x}^T B\mathbf{x}) f_{\mathcal{X}}(\mathbf{x})d\mathbf{x} = \Tr(BK)
\eeann
\end{lemma}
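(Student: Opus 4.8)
The plan is to reduce the quadratic form to a trace and then interchange trace with integration. The starting observation is that $\mathbf{x}^T B \mathbf{x}$ is a scalar, hence equal to its own trace, and by the cyclic property of the trace
\beann
\mathbf{x}^T B \mathbf{x} = \Tr(\mathbf{x}^T B \mathbf{x}) = \Tr(B\, \mathbf{x}\mathbf{x}^T).
\eeann
This rewriting is the key maneuver: it moves the $\mathbf{x}$-dependence into the single matrix $\mathbf{x}\mathbf{x}^T$, leaving the constant matrix $B$ outside.

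Next I would substitute this into the integral and pull the trace out. Since the trace is a fixed finite linear combination of the matrix entries, it commutes with integration, so
\beann
\int (\mathbf{x}^T B \mathbf{x})\, f_{\mathcal{X}}(\mathbf{x})\, d\mathbf{x}
= \int \Tr\bigl(B\, \mathbf{x}\mathbf{x}^T\bigr) f_{\mathcal{X}}(\mathbf{x})\, d\mathbf{x}
= \Tr\!\left( B \int \mathbf{x}\mathbf{x}^T f_{\mathcal{X}}(\mathbf{x})\, d\mathbf{x} \right).
\eeann
The remaining step is to identify the matrix integral as the covariance matrix. Because $\mathcal{X}$ has zero mean, the $(i,j)$ entry of $\int \mathbf{x}\mathbf{x}^T f_{\mathcal{X}}(\mathbf{x})\, d\mathbf{x}$ is precisely $\int x_i x_j\, f_{\mathcal{X}}(\mathbf{x})\, d\mathbf{x} = \text{Cov}(X_i,X_j) = K_{ij}$, so the matrix integral equals $K$, giving the claimed $\Tr(BK)$.

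There is no serious obstacle here; the only point requiring a word of care is the interchange of trace and integral, which is immediate since the trace is just the finite sum of diagonal entries and each entrywise integral $\int x_i x_j f_{\mathcal{X}}(\mathbf{x})\,d\mathbf{x}$ converges (the second moments of a nondegenerate Gaussian are finite). As an equivalent and perhaps more elementary alternative, one may expand $\mathbf{x}^T B \mathbf{x} = \sum_{i,j} B_{ij} x_i x_j$ directly, integrate term by term using $\int x_i x_j f_{\mathcal{X}}(\mathbf{x})\,d\mathbf{x} = K_{ij}$, and recognize the resulting sum $\sum_{i,j} B_{ij} K_{ij} = \sum_{i,j} B_{ij} K_{ji}$ (using symmetry of $K$) as $\Tr(BK)$; this avoids even the trace identity at the cost of an index computation. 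Either route establishes the lemma.
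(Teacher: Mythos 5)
Your proof is correct and essentially the same as the paper's: both arguments come down to linearity plus the identity $\int x_i x_j f_{\mathcal{X}}(\mathbf{x})\,d\mathbf{x} = K_{ij}$, and your closing ``elementary alternative'' (expanding $\mathbf{x}^T B\mathbf{x} = \sum_{i,j} B_{ij} x_i x_j$ and summing) is precisely the paper's proof, which decomposes $B$ over the matrix units $E_{ij}$. Your main packaging via $\mathbf{x}^T B\mathbf{x} = \Tr(B\,\mathbf{x}\mathbf{x}^T)$ is a clean cosmetic variant, not a different argument.
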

\begin{proof}
First suppose $B=E_{ij}$ then
\beann
\int (\mathbf{x}^T E_{ij}\mathbf{x}) f_{\mathcal{X}}(\mathbf{x})d\mathbf{x} = \int x_ix_j f_{\mathcal{X}}(\mathbf{x})d\mathbf{x} = K_{ij} = \Tr(E_{ij}K)
\eeann
Now suppose $B= \sum_{ij}b_{ij}E_{ij}$ then we can use the linearity of the inner product $\ip{x}{y} = x^Ty$ and the trace  to get the result.
\end{proof}
\begin{theorem}
	The R\'enyi entropy rate per coordinate of the Gaussian random vector process  $\mathcal{X}$ is given by
	\beann
	\bar{H}_{\alpha}(\mathcal{X}) = \frac{m}{2} \log 2\pi\alpha^{\frac{1}{\alpha-1}} + \frac{1}{4\pi} \int_{-\pi}^{\pi} \Tr (\log K(\theta)) d\theta
	\eeann
\end{theorem}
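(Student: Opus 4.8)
The plan is to follow the same skeleton as the Shannon proof, replacing the $-\int f\log f$ computation with the evaluation of $\int f_{\mathcal{X}}^{\alpha}$ demanded by the R\'enyi definition in (\ref{eq:Ent1}). First I would substitute the explicit block-Toeplitz Gaussian density, so that
\beann
f_{\mathcal{X}}^{\alpha}(\mathbf{x}) = \big((2\pi)^{nm}\det\hat{K}_n\big)^{-\alpha/2}\, e^{-\frac{\alpha}{2}\mathbf{x}^T\hat{K}_n^{-1}\mathbf{x}}.
\eeann
The entire problem then reduces to evaluating the single Gaussian integral $\int e^{-\frac{\alpha}{2}\mathbf{x}^T\hat{K}_n^{-1}\mathbf{x}}\,d\mathbf{x}$ in closed form and tracking the powers of the normalization.

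The key observation is that this integrand is an unnormalized zero-mean Gaussian with precision matrix $\alpha\hat{K}_n^{-1}$, i.e. covariance $\frac{1}{\alpha}\hat{K}_n$. Since such a Gaussian integrates its exponential factor to $\sqrt{(2\pi)^{nm}\det(\frac{1}{\alpha}\hat{K}_n)}$, I obtain
\beann
\int e^{-\frac{\alpha}{2}\mathbf{x}^T\hat{K}_n^{-1}\mathbf{x}}\,d\mathbf{x} = \sqrt{(2\pi)^{nm}\,\alpha^{-nm}\det\hat{K}_n}.
\eeann
Substituting back and combining the powers of $(2\pi)^{nm}\det\hat{K}_n$ yields $\int f_{\mathcal{X}}^{\alpha} = \big((2\pi)^{nm}\det\hat{K}_n\big)^{(1-\alpha)/2}\,\alpha^{-nm/2}$.

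Applying $\frac{1}{1-\alpha}\log(\cdot)$ to this expression, the factor $(1-\alpha)/2$ cancels on the first term while the $\alpha$-term rearranges as $-\frac{nm}{2(1-\alpha)}\log\alpha = \frac{nm}{2}\log\alpha^{1/(\alpha-1)}$, giving
\beann
H_{\alpha}(\mathcal{X}) = \frac{nm}{2}\log\big(2\pi\,\alpha^{1/(\alpha-1)}\big) + \frac{1}{2}\log\det\hat{K}_n.
\eeann
Dividing by $n$, letting $n\to\infty$, and applying the identity $\log\det(A)=\Tr(\log A)$ together with Theorem \ref{thm:Szego} to the last term exactly as in the Shannon proof then produces the claimed formula, with the integral $\frac{1}{4\pi}\int_{-\pi}^{\pi}\Tr(\log K(\theta))\,d\theta$ appearing identically.

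The computation is essentially routine, and I expect the only real obstacle to be the exponent bookkeeping needed to collapse the $\alpha$-dependence into the compact factor $\alpha^{1/(\alpha-1)}$ — precisely the $\sigma$-free constant already recorded in the scalar $N(0,\sigma^2)$ R\'enyi formula of Section \ref{sec:ERMG}. It is worth noting that, unlike the Shannon case, Lemma \ref{lem:Gauss} is not invoked here, since no expectation of a quadratic form arises; only the normalizing constant of a rescaled Gaussian is required. This also makes transparent why the R\'enyi and Shannon rates differ solely in their constant term and share the same spectral integral.
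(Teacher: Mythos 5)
Your proposal is correct and follows essentially the same route as the paper: substitute the block-Toeplitz Gaussian density, evaluate $\int e^{-\frac{\alpha}{2}\mathbf{x}^T\hat{K}_n^{-1}\mathbf{x}}\,d\mathbf{x}$ via the standard Gaussian integral, collapse the $\alpha$-dependence into $\alpha^{1/(\alpha-1)}$, and finish with $\log\det = \Tr\log$ and Theorem \ref{thm:Szego}. Your exponent bookkeeping is in fact slightly cleaner than the paper's (which carries $(2\pi)^{-nm/2}$ where $(2\pi)^{-nm\alpha/2}$ is meant, though the final expression is unaffected), and your observation that Lemma \ref{lem:Gauss} is not needed here matches the paper's treatment.
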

\begin{proof}	
We have
\beann
\bar{H}_{\alpha}(\mathcal{X}) &=& \lim_{n\rightarrow \infty} \frac{1}{n} \frac{1}{(1-\alpha)}\log\int f_{\mathcal{X}}^{\alpha}(\mathbf{x}) d\mathbf{x}\\
&=& \lim_{n\rightarrow \infty} \frac{1}{n} \frac{1}{(1-\alpha)}\log \int (2\pi)^{-\frac{nm}{2}}(\det \hat{K}_n)^{-\frac{\alpha}{2}}\\ && \qquad\qquad\qquad\qquad \exp{-\frac{1}{2} (\mathbf{x_1,x_2,\ldots,x_n})^T \alpha\hat{K}_{n}^{-1}(\mathbf{x_1,x_2,\ldots,x_n})} d\mathbf{x}\\
&=& \lim_{n\rightarrow \infty} \frac{1}{n} \frac{1}{(1-\alpha)}\log (2\pi)^{-\frac{nm}{2}}(\det \hat{K}_n)^{-\frac{\alpha}{2}} \cdots \\ && \qquad\qquad\qquad \cdots\int\exp{-\frac{1}{2} (\mathbf{x_1,x_2,\ldots,x_n})^T \alpha\hat{K}_{n}^{-1}(\mathbf{x_1,x_2,\ldots,x_n})} d\mathbf{x}
\eeann
Now using the fact about Gaussian integrals that if A is a positive definite matrix then	
\beann
\int \exp\{-\frac{1}{2}x^TAx \}d\mathbf{x}  = \sqrt{\frac{(2\pi)^n}{\det(A)}} 
\eeann
we get
\beann
\bar{H}_{\alpha}(\mathcal{X}) &=& \lim_{n\rightarrow \infty} \frac{1}{n} \frac{1}{(1-\alpha)}\log \Big[ (2\pi)^{-\frac{nm}{2}}(\det \hat{K}_n)^{-\frac{\alpha}{2}} \frac{(2\pi)^{nm/2}}{(\det(\alpha\hat{K}_n^{-1}))^{1/2}}\Big]\\
&=& \lim_{n\rightarrow \infty} \frac{1}{n} \frac{1}{2(1-\alpha)}\log \Big[ (2\pi)^{(1-\alpha)nm}\det (\hat{K}_n)^{1-\alpha} \alpha^{-nm} \Big]\\
&=& \lim_{n\rightarrow \infty} \frac{1}{n} \frac{1}{2(1-\alpha)}\log \Big[ (2\pi \alpha^{\frac{1}{\alpha-1}})^{(1-\alpha)nm}\det (\hat{K}_n)^{1-\alpha}  \Big]\\
&=& \lim_{n\rightarrow \infty} \Big[\frac{nm(1-\alpha)}{2n(1-\alpha)}\log  (2\pi \alpha^{\frac{1}{\alpha-1}}) + \frac{1-\alpha}{2n(1-\alpha)}\log(\det (\hat{K}_n))  \Big]\\
&=& \lim_{n\rightarrow \infty} \Big[\frac{m}{2}\log  (2\pi \alpha^{\frac{1}{\alpha-1}}) + \frac{1}{2n}\log(\det (\hat{K}_n))  \Big]
\eeann
Using the identity
\beann
\log (\det(A)) = \Tr (\log (A))
\eeann
We get
\beann
\bar{H}_{\alpha}(\mathcal{X})&=&  \Big[\frac{m}{2}\log  (2\pi \alpha^{\frac{1}{\alpha-1}}) + \lim_{n\rightarrow \infty} \frac{1}{2n}\Tr\log(\hat{K}_n)  \Big]
\eeann
Finally applying theorem (\ref{thm:Szego}) we get
\beann
	\bar{H}(\mathcal{X}) =\frac{m}{2}\log  (2\pi \alpha^{\frac{1}{\alpha-1}}) + \frac{1}{4\pi} \int_{-\pi}^{\pi} \Tr (\log (K(\theta))) d\theta
\eeann
\end{proof}
\section{Conclusion}
Entropy rates are important in information theory and physics. Exact formulas for entropy rates of stochastic processes are usually not easy to find. Exceptions to this general case are Markov processes, Gaussian processes and a few classes of hidden Markov processes. In this paper we have generalized the formulas for the entropy rates of vector valued stationary Gaussian processes. The main tool for proving our formulas is the block matrix version of the Szeg\"o's theorem. These formulas may have potential applications in channel capacity calculations of classical and quantum information and in the analysis of multi-dimensional data. 
\bibliographystyle{IEEEtran}
\bibliography{GaussianEntropyRate}

\begin{thebibliography}{10}
\providecommand{\url}[1]{#1}
\csname url@samestyle\endcsname
\providecommand{\newblock}{\relax}
\providecommand{\bibinfo}[2]{#2}
\providecommand{\BIBentrySTDinterwordspacing}{\spaceskip=0pt\relax}
\providecommand{\BIBentryALTinterwordstretchfactor}{4}
\providecommand{\BIBentryALTinterwordspacing}{\spaceskip=\fontdimen2\font plus
\BIBentryALTinterwordstretchfactor\fontdimen3\font minus
  \fontdimen4\font\relax}
\providecommand{\BIBforeignlanguage}[2]{{%
\expandafter\ifx\csname l@#1\endcsname\relax
\typeout{** WARNING: IEEEtran.bst: No hyphenation pattern has been}%
\typeout{** loaded for the language `#1'. Using the pattern for}%
\typeout{** the default language instead.}%
\else
\language=\csname l@#1\endcsname
\fi
#2}}
\providecommand{\BIBdecl}{\relax}
\BIBdecl

\bibitem{CS1948}
C.~Shannon, ``A mathematical theory of communication,'' \emph{The Bell System
  Technical Journal}, vol.~27, pp. 313--335, 1948.

\bibitem{AR1961}
A.~R\'enyi, ``On measures of information and entropy,'' \emph{Proceedings of
  the Symposium on Mathematical Statistics and Probability (University of
  California Press, Berkeley)}, pp. 547--561, 1961.

\bibitem{T1988}
C.~Tsallis, ``Possible generalization of bolzmann-gibbs statistics,'' \emph{J.
  Stat. Phys}, vol.~52, p. 479–487, 1988.

\bibitem{BT2002}
B.~Terhal, ``Detecting quantum entanglement,'' \emph{Theoretical Computer
  Science}, vol.~2, pp. 313--335, 2002.

\bibitem{GL2004}
O.~G\"uhne and M.~Lewenstein, ``Entropic uncertainty relations and
  entanglement,'' \emph{Phys. Rev. A}, vol. 70, 022316, 2004.

\bibitem{IB2007}
I.~Bialynicki-Birula, ``R\'enyi entropy and uncertainty relations,''
  \emph{Foundations of Probability and Physics, AIP Conference Proceedings 889,
  American Institute of Physics, Melville}, p. p. 52.

\bibitem{MH2011}
M.~Mosonyi and F.~Hiai, ``On the quantum renyi relative entropies and related
  capacity formulas,'' \emph{IEEE Transactions on Information Theory}, vol.~20,
  pp. 547--561.

\bibitem{FK2008}
F.~Franchini, A.~Its, and V.~Korepin, ``R\'enyi entropy of the {XY} spin
  chain,'' \emph{Journal of Physics A: Mathematical and Theoretical}, vol.
  41:0253020, pp. 2474--2487.

\bibitem{LMD2000}
E.~Lenzi, R.~Mendes, and L.~da~Silva, ``Statistical mechanics based on
  {R}\'enyi entropy,'' \emph{Physica A: Statistical Mechanics and its
  Applications}, vol. 280, pp. 337--345.

\bibitem{CT1991}
T.~Cover and J.~Thomas, \emph{Elements of Information theory}.\hskip 1em plus
  0.5em minus 0.4em\relax New York: Wiley, 1991.

\bibitem{RFC1999}
Z.~Rached, A.~Fady, and L.~Campbell, ``Renyi’s entropy rate for discrete
  markov sources,'' in \emph{Proc. CISS’99, Baltimore, MD}, 1999, pp. 17--19.

\bibitem{MPW2011}
B.~Markus, K.~Petersen, and T.~E. Wiesmann, \emph{Entropy rate of hidden
  {M}arkov processes and connections to dynamical systems: Papers from Banff
  International Research Station Workshop}.\hskip 1em plus 0.5em minus
  0.4em\relax Cambridge University Press, 2011.

\bibitem{JSZ2004}
P.~Jacquet, G.~Seroussi, and W.~Szpankowski, ``On the entropy of a hidden
  {M}arkov process,'' \emph{Proceedings of Data Compression Conference,
  Snowbird, UT}, pp. 362--371, 2004.

\bibitem{HM2010}
G.~Han and B.~Marcus, ``Asymptotics of entropy rate in special families of
  hidden {M}arkov chains,'' \emph{IEEE Trans. Inf. Theory}, vol.~56, pp.
  1287--1295, 2010.

\bibitem{MMN2012}
K.~Marchand, J.~Mulherkar, and B.~Nachtergaele, ``Entropy rate calculations
  using algebraic measures,'' \emph{IEEE Int. Sym. on Inf. theory proceedings,
  Boston USA}, pp. 1072--1076, 2012.

\bibitem{GJJV2008}
A.~Goldsmith, S.~Jafar, N.~Jindal, and S.~Vishwanath, ``Capacity limits of mimo
  channels,'' \emph{IEEE Journal on Selected Areas in Communications}, vol. 21,
  5, 2003.

\bibitem{K1958}
A.~Kolmogorov, ``A new invariant for transitive dynamical systems,''
  \emph{Dokl. An. SSR.}, vol. 119, p. 861–864, 1958.

\bibitem{GP2010}
L.~Golshani and E.~Pasha, ``R\'enyi entropy rate for gaussian processes,''
  \emph{Information Sciences}, vol. 180, 8, pp. 1486--1491, 2010.

\bibitem{S1915}
G.~Szeg\"o, ``Ein grenzwertsatz über die toeplitzschen determinanten einer
  reellen positiven funktion,'' \emph{Math. Ann.}, vol. 76 (4), 1915.

\bibitem{S1952}
------, ``On certain hermitian forms associated with the fourier series of a
  positive function,'' \emph{Comm. Sém. Math. Univ. Lund}, pp. 228--238, 1952.

\bibitem{W1974}
H.~Widom, ``Asymptotic behavior of block toeplitz matrices and determinants,''
  \emph{Adv. in Math.}, vol.~13, pp. 284--322, 1974.

\bibitem{W1976}
------, ``Asymptotic behavior of block toeplitz matrices and determinants.
  ii,'' \emph{Adv. in Math.}, vol.~21, pp. 1--29, 1976.

\bibitem{J2002}
K.~Johansson, ``Toeplitz determinants, random growth and determinantal
  processes.'' in \emph{Proceedings of the ICM, Beijing}, vol.~3, 2002, pp.
  53--62.

\bibitem{BE2007}
E.~Basor and T.~Ehrhardt, ``Asymptotics of block toeplitz determinants and the
  classical dimer model,'' \emph{Comm. Math. Phys.}, vol. 274, pp. 427--456,
  2007.

\bibitem{IMM2008}
A.~Its, M.~F., and M.~Mo, ``Entanglement entropy in quantum spin chains with
  finite range interaction.'' \emph{Comm. Math. Phys.}, vol. 284, pp. 117--185,
  2008.

\bibitem{BSK2006}
A.~B\"ottcher, B.~Silbermannm, and A.~Karlovich, \emph{Analysis of Toeplitz
  operators}.\hskip 1em plus 0.5em minus 0.4em\relax Springer monographs in
  mathematics, 2006.

\end{thebibliography}

\end{document}